\DeclareTextFontCommand{\textsl}{\fontfamily{ppl}\fontshape{sl}\selectfont}
\title[Strategy Recovery]{Strategy Recovery for Stochastic Mean Payoff Games}
\author[M.~Mamino]{\lsstyle Marcello Mamino}
\address{Laboratoire d'Informatique de l'\'Ecole Polytechnique (\textsc{lix})\\
B\^atiment Alan Turing\\
1 rue Honor\'e d'Estienne d'Orves, Campus de l'\'Ecole Polytechnique,
91120 Palaiseau, France.}
\email{\mailto{mamino@lix.polytechnique.fr}}
\thanks{The author has received funding from the European Research Council
under the European Community's Seventh Framework Programme (FP7/2007-2013
Grant Agreement no.~257039).}
\date{\mydate{16}{vi}{2015}}
\keywords{}
\def\Hy@Warning#1{}
\def\@setemails{%
\ifnum\theg@author > 1 
\mbox{{\itshape E-mail addresses}:\space}{\ttfamily\emails}. 
\else 
\mbox{{\itshape E-mail address}:\space}{\ttfamily\emails}. 
\fi%
}
\def\ps@plain{\ps@empty
  \def\@oddfoot{\normalfont\normalsize \hfil\thepage\hfil}%
  \let\@evenfoot\@oddfoot}
\def\ps@firstpage{\ps@plain
  \def\@oddfoot{\normalfont\normalsize \hfil\thepage\hfil
     \global\topskip\normaltopskip}%
  \let\@evenfoot\@oddfoot
  \def\@oddhead{\@serieslogo\hss}%
  \let\@evenhead\@oddhead 
}
\def\ps@headings{\ps@empty
  \def\@evenhead{%
    \setTrue{runhead}%
    \normalfont\normalsize
    \rlap{\thepage}\hfil
    \textsc{\lsstyle\MakeLowercase{\shortauthors}\hfil}}%
  \def\@oddhead{%
    \setTrue{runhead}%
    \normalfont\normalsize \hfil
    \textsc{\lsstyle\MakeLowercase{\rightmark{}{}}}\hfil\llap{\thepage}}%
  \let\@mkboth\markboth
}
\def\ifinlinemath#1#2{#2}
\def\mathshift{$}
\def\myshift#1${{\def\ifinlinemath##1##2{##1}\raisebox{0ex}[0ex][0ex]{\mathshift#1\mathshift}}}
\def\myleftparen{(}
\def\myrightparen{)}
\let\oldsection\section
\def\newsection#1{\oldsection{\lsstyle #1}}
\def\newsectionr#1{\oldsection*{\lsstyle #1}}
\def\section{\@ifstar\newsectionr\newsection}
\newtheorem{ghost@theorem}{}[]
\def\@maketheorem#1=#2;{
	\newtheorem{#1}[ghost@theorem]{#2}}
\def\maketheorem#1{
	\@for\@x:=#1\do{
		\expandafter\@maketheorem\@x;}}
\def\N{\mathbb N}
\def\intervalcc#1#2{[\mkern1mu#1,#2\mkern1mu]}
\def\intervalco#1#2{[\mkern1mu#1,#2\mkern-1mu)}
\let\oldexists\exists\def\exists{\oldexists\mkern 1mu}
\let\n\oldstylenums
\def\mydate#1#2#3{\hbox{\n{#1}$\cdot${\sc#2}$\cdot$\n{#3}}}
\def\-{\nobreakdash-\hspace{0pt}}
\def\U-{\raise0.2ex\hbox{-}}
\def\url#1{\href{#1}{url\nobreakdash---\texttt{#1}}}
\def\mailto#1{\href{mailto:#1}{\texttt{#1}}}
\def\margin#1{\hbox to 0pt{\hss#1\ }}
\def\eatspace#1{#1}
\def\myitem#1{%
\hfil\break\margin{\textsc{\hbox to 1em{\hss #1\hss}-}}%
\def\@currentlabel{\textsc{#1}}\eatspace}
\theoremstyle{plain}
\theoremstyle{definition}
\theoremstyle{remark}
\theoremstyle:=definition,remark,plain\do{%
\expandafter\g@addto@macro\csname th@\theoremstyle\endcsname{%
\setlength\thm@preskip\parskip%
\setlength\thm@postskip{0pt}%
}}
\renewenvironment{proof}[1][\proofname]{\par
  \pushQED{\qed}%
  \normalfont \topsep=0pt
  \trivlist
  \item[\hskip\labelsep
        \itshape
    #1\@addpunct{.}]\ignorespaces
}{%
  \popQED\endtrivlist\@endpefalse
}
\def\section{\@startsection{section}{1}%
  \z@\z@{1sp}%
  {\normalfont\scshape\centering}}
\edef\orig@output{\the\output}
\def\pmax{\textsc{Max}}
\def\pmin{\textsc{Min}}
\def\v{\mathbf{v}}
\def\Q{\mathbb{Q}}
\def\N{\mathbb{N}}
\def\trans#1#2{\raisebox{-0.5ex}{$\mathsurround=0pt\textstyle \xrightarrow{#1,#2}$}}
\begin{document}

\begin{abstract}
We prove that to find optimal positional strategies for stochastic mean payoff
games when the value of every state of the game is known, in general, is as
hard as solving such games tout court. This answers a question posed by Daniel
Andersson and Peter Bro Miltersen.
\end{abstract}

\maketitle

In this note, we consider perfect information $0$-sum stochastic games,
which, for short, we will just call {\it stochastic games}. For us, a
stochastic game is a finite directed graph whose vertices we call
{\it states} and whose edges we call {\it transitions}, multiple edges and
loops are allowed but no state can be a sink. To each state~$s$ is associated
an~{\it owner}~$o(s)$ which is one of the two players~\pmax\ and~\pmin.
Each transition~$s\trans Ap t$ has an {\it action}~$A$
and a {\it probability}~$p\in\Q\cap\intervalcc01$, with the condition that, for
each state~$s$, the probabilities of the transitions exiting~$s$
associated to the same action must sum to~$1$. We say that the action~$A$
is {\it available} at state~$s$ if one of the transitions exiting~$s$ is
associated to~$A$. Furthermore to each action~$A$ is associated a {\it
reward\/}~$r(A)\in\Q$.

A play of a stochastic game~$G$ begins in some state~$s_0$ and produces an
unending sequence of states~$\{s_i\}_{i\in\N}$ and
actions~$\{A_i\}_{i\in\N}$. At move~$i$, the owner of
the current state~$s_i$ chooses an action~$A_i$ among those available at~$s_i$,
then one of the transitions exiting~$s_i$ with action~$A_i$ is selected at
random according to their respective probabilities, and the next
state~$s_{i+1}$ is the destination of the chosen transition.
A play can be evaluated according to the $\beta$-discounted payoff
criterion
\[
\v_\beta(A_0,A_1\dotsc) = (1-\beta) \sum_{i=0}^{\infty} r(A_i)\beta^i
\]
for $\beta\in\intervalco01$. Or it can be evaluated according to the mean
payoff criterion
\[
\v_1(A_0,A_1\dotsc) = \liminf_{n\to\infty} \frac{1}{n+1} \sum_{i=0}^{n} r(A_i)
\]
The goal of~\pmax\ is to maximize the evaluation, that of~\pmin\ is to
minimize it.
It is known that for both criteria there are optimal strategies which are
{\it positional}~\cite{Gil57,LiLi69}, namely such that the action chosen at~$s_i$
depends only on the state~$s_i$ -- an not, for instance, on the preceding
states in the play, on~$i$, or on a random choice. Given two positional
strategies~$\sigma$ and~$\tau$ for~\pmax\ and~\pmin\ respectively, and
given~$\beta\in\intervalcc01$, we
denote~$\v_\beta(G,s_0,\sigma,\tau)$ the expected value of~$\v_\beta$ on all plays
generated by~$\sigma$ and~$\tau$ starting from~$s_0$. We
write~$\v_\beta(G,s_0)$ for~$\v_\beta(G,s_0,\sigma,\tau)$ with~$\sigma$
and~$\tau$ optimal. For basic information on stochastic games one may
refer to the book~\cite{Filar-Vrieze}.

Given a stochastic game with probabilities and rewards encoded in binary,
and a value of~$\beta$ also encoded in binary,
it makes sense to study the computational complexity of the task of
solving the game. Strategically solving a game, as defined in~\cite{AnMil09},
means to find a pair of optimal strategies. Quantitatively solving~$G$ means
to find~$\v_\beta(G,s)$ for all states~$s$. In general, the second task is
easier than the first. The {\it strategy recovery problem} is,
given the quantitative solution of a game, to produce a strategic
solution. It has been observed in~\cite{AnMil09} that this task can be performed
trivially in linear time for discounted payoff games, and also, but not
trivially, for terminal payoff and simple stochastic games,
hence it was asked whether the same could be done for
stochastic mean payoff games (this is, indeed, the only missing element to
complete Andersson and Miltersen's picture).
Our aim is to prove that the strategy recovery problem for stochastic mean
payoff games is as hard as it possibly can.

\begin{theorem}\label{th-main}
The strategy recovery problem for stochastic mean payoff games is
equivalent, modulo polynomial time Turing reductions, to the task of
strategically solving mean payoff games.
\end{theorem}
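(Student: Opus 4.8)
The plan is to prove the equivalence by two polynomial time Turing reductions, neither of which is a trivial inclusion: the two tasks live in different worlds, one asking to recover strategies for a genuinely stochastic game from its values, the other to solve a purely deterministic mean payoff game. Write $R$ for strategy recovery for stochastic mean payoff games and $S$ for strategically solving deterministic mean payoff games. I would establish $S\le_T R$, so that recovery is at least as hard as the deterministic problem, and $R\le_T S$, so that the knowledge of the values collapses the stochastic difficulty down to exactly the deterministic one.

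For the hardness reduction $S\le_T R$, take a deterministic mean payoff game $G$ on states $V$ with integer rewards and build a stochastic game $H$ on the same states: every move $s\to t$ of $G$ becomes an action at $s$ carrying the reward of that move, whose transition reaches $t$ with probability $1-p$ and otherwise, with probability $p=2^{-N}$ for $N$ polynomially large, restarts at a uniformly chosen state of $V$. Every action restarts to the uniform distribution with positive probability, so the chain induced by any pair of positional strategies is irreducible; hence $H$ is ergodic, its value $\v_1(H,s)=c$ is a single constant independent of $s$, and $c$ is rational of polynomial bit size. The decisive calibration is to pick $N$ so that $p$ is below the least gap between distinct mean payoff values of $G$ (such gaps are at least $1/|V|^2$): between two restarts a play of $H$ reproduces, up to a transient that washes out as $p\to0$, a play of $G$, while the restart law gives weight at least $p/|V|$ to every state, so any positional strategy that is suboptimal in $G$ at even one state strictly lowers $c$. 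Thus the positional pairs optimal in $H$ are exactly those decoding to uniformly optimal positional strategies of $G$.

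Now I would solve $G$ with the oracle $R$. Since $c$ is the only unknown, I binary search it among the rationals of the prescribed polynomial bit size; for each candidate I feed $(H,\text{candidate})$ to $R$ and check the returned positional pair for optimality. This check is oracle free and runs in polynomial time: evaluating a fixed positional pair and verifying that neither player has a profitable unilateral deviation amounts to solving two one-player average-reward Markov decision problems by linear programming. At the true value $R$ must return an optimal pair, which passes the check, and decoding it yields optimal strategies for $G$.

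For the converse $R\le_T S$, suppose we are handed $H$ together with its correct values $g(s)=\v_1(H,s)$. Treating $g$ as data, I would first apply the potential transformation sending the reward of each action $A$ at $s$ to $r(A)+\sum_t p(s,A,t)\,g(t)-g(s)$, which preserves optimal strategies and normalises the value to $0$, and then compute in polynomial time the subgame $H'$ of value conserving actions, those attaining the optimum of the gain equation; every optimal positional strategy is supported on $H'$. What remains is to select, among conserving actions, a pair that realises the value -- a second order, bias optimisation -- and the plan is to encode this residual problem as a deterministic mean payoff game $D$ on $H'$, with the transformed rewards, whose solution the oracle $S$ delivers and whose optimal strategies lift back to $H$. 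The main obstacle is exactly this encoding: a priori the residual bias problem is still stochastic, and turning a stochastic bias optimisation into a deterministic mean payoff game is precisely the kind of step that is open in general. The crux is therefore to show that the exact knowledge of $g$ suffices to eliminate the random transitions from what is left -- pinning down the optimal recurrent classes and reducing the genuine remaining choices to deterministic routing among conserving actions -- so that the residue is honestly a deterministic mean payoff game; I expect the multichain bookkeeping of recurrent classes, and the verification that no stochastic degree of freedom survives, to be the heart of the proof.
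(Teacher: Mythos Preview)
You have misread the statement. In this paper ``stochastic game'' is the ambient notion, and ``mean payoff games'' in the theorem means \emph{stochastic} mean payoff games, not deterministic ones; the abstract makes this explicit (``as hard as solving such games tout court''). With the correct reading, one direction is immediate: given an oracle that strategically solves stochastic mean payoff games, strategy recovery is trivial---ignore the supplied values and solve. Your elaborate $R\le_T S$ plan, which you yourself concede is incomplete (``turning a stochastic bias optimisation into a deterministic mean payoff game is precisely the kind of step that is open in general''), is attacking a problem the theorem does not pose.

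For the substantive direction, the paper proceeds quite differently. It chains the Andersson--Miltersen reduction from stochastic mean payoff to discounted games with a new reduction from discounted games to what it calls $\beta$-recurrent mean payoff games (every action, with probability $1-\beta$, resets to a fixed state $s_0$), and then---this is the key device---from any $\beta$-recurrent game it builds a game $G'$ all of whose states have value exactly~$0$, yet whose optimal positional strategies restrict to optimal strategies of the original. The construction glues two copies of the $\beta$-recurrent game with owners and reward signs swapped in the second copy, cross-wiring the reset transitions; symmetry forces the value to~$0$. Since the values of $G'$ are known a priori, strategy recovery on $G'$ is literally the same task as strategically solving $G'$.

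Your $S\le_T R$ sketch is in the right spirit (a restart mechanism to force ergodicity) but has a genuine gap even on its own terms. You propose to binary-search the common value $c$ of your ergodic game $H$ and, for each candidate, call the recovery oracle and test the returned pair for optimality. But you have no comparison primitive: when $R$ is fed a wrong value it may return garbage, and a failed optimality check tells you nothing about whether $c$ lies above or below the candidate. Without comparison you are enumerating rationals of polynomial bit length, of which there are exponentially many. The paper's value-$0$ construction is precisely what lets one bypass any need to know the value.
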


We will combine the reduction from stochastic mean payoff to discounted
payoff games proven in~\cite{AnMil09} with a new reduction from discounted to
mean payoff games of a special form that we call~$\beta$-recurrent. Then
we will show that $\beta$-recurrent mean payoff games can be turned into
{\it strategically equivalent} mean payoff games having the additional
property that all states have value~$0$. For this latter class of games,
the strategy recovery problem is obviously equivalent to solving the games
strategically.

\begin{definition}
Let $G$ be a stochastic game and $s_0$ one of the states of~$G$.  We
define the {\it $\beta$-recurrent game associated to~$G$
and~$s_0$}, denoted~$G_{\beta,s_0}$.  The game~$G_{\beta,s_0}$ has the
same state-space as~$G$.  Each transition~$a\trans Ap b$ in~$G$ is
replaced by two new transitions in~$G_{\beta,s_0}$,
namely~$a\trans A{\beta p} b$ and~$a\trans A {(1-\beta)p} s_0$.
The first of these new transitions will be called~{\it of the first kind},
the second~{\it of the second kind}. We say that a game is {\it
$\beta$-recurrent} if it results from the construction just defined, for
some~$G$.
\end{definition}

Notice that our $\beta$-recurrent games are ergodic in the
sense of~\cite{BEGM10}. The complexity of ergodic games has been settled in a
recent work~\cite{ChaIJ14} (see the full version~\cite{ChaIJ14-full}),
however we need for our reduction the extra
properties of $\beta$-recurrent games. Interestingly, the definition of
ergodic in~\cite{ChaIJ14} is more restrictive than that in~\cite{BEGM10}, and, in
particular, in this stronger sense, a $\beta$-recurrent game may not be
ergodic, nor an ergodic game needs to be $\beta$-recurrent.

\begin{lemma}\label{th-lemma1}
The task of quantitatively solving stochastic discounted payoff games is
polynomial time Turing reducible to quantitatively solving
$\beta$-recurrent stochastic mean payoff games.
\end{lemma}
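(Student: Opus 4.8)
The plan is to show that the mean payoff game $G_{\beta,s_0}$ has value $\v_\beta(G,s_0)$ at the state $s_0$; the reduction then amounts to running the assumed oracle once for every choice of $s_0$.

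First, since the two transitions $a\trans A{\beta p}b$ and $a\trans A{(1-\beta)p}s_0$ that replace $a\trans Apb$ carry the same action $A$, a positional strategy in $G_{\beta,s_0}$ is literally a positional strategy in $G$: at each state it selects one of the available actions, and the construction changes neither the states nor the actions available at any of them. Fix positional strategies $\sigma,\tau$. Let $P$ be the stochastic matrix of the Markov chain they induce on the states of $G$, let $\rho$ be the column vector whose entry at $s$ is the reward of the action that $\sigma,\tau$ prescribe at $s$, and let $\delta_{s_0}$ be the row vector concentrated at $s_0$. In $G$ the expected reward at step $i$ of a play from $s_0$ is $\delta_{s_0}P^i\rho$, so
\[
\v_\beta(G,s_0,\sigma,\tau)=(1-\beta)\sum_{i\ge0}\beta^i\delta_{s_0}P^i\rho=\nu\rho,\qquad\text{where}\qquad\nu\eqdef(1-\beta)\sum_{i\ge0}\beta^i\delta_{s_0}P^i
\]
is a probability distribution supported on the set $R$ of states reachable from $s_0$. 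The same $\sigma,\tau$ induce in $G_{\beta,s_0}$ the chain with matrix $Q=\beta P+(1-\beta)E$, where $E$ is the stochastic matrix all of whose rows equal $\delta_{s_0}$: this is the dynamics ``with probability $\beta$ move as in $G$, with probability $1-\beta$ restart at $s_0$''. Since $\beta>0$, the first-kind transitions make $R$ also the set of states $Q$ can reach from $s_0$; $R$ is closed under $Q$, and from every state of $R$ one returns to $s_0$ in one step with probability at least $1-\beta>0$, so $Q$ restricted to $R$ is irreducible. Reindexing the geometric series gives $\beta\nu P=\nu-(1-\beta)\delta_{s_0}$, and $\nu E=\delta_{s_0}$ since $\nu$ is a probability vector; hence $\nu Q=\nu$, so $\nu$ is the unique stationary distribution of the irreducible chain on $R$. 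By the ergodic theorem for finite Markov chains, along almost every play from $s_0$ the averages $\frac1{n+1}\sum_{i=0}^n\rho(s_i)$ converge to $\nu\rho$, hence $\v_1(G_{\beta,s_0},s_0,\sigma,\tau)=\nu\rho=\v_\beta(G,s_0,\sigma,\tau)$.

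Thus every pair of positional strategies has one and the same value in the discounted game $G$ at $s_0$ and in the mean payoff game $G_{\beta,s_0}$ at $s_0$. Both games admit optimal positional strategies \cite{Gil57,LiLi69}, so the value of each at $s_0$ equals the supremum over positional $\sigma$ of the infimum over positional $\tau$ of the payoff; since these two expressions coincide, $\v_1(G_{\beta,s_0},s_0)=\v_\beta(G,s_0)$. (If $\beta=0$ the discounted value is simply the largest immediately available reward, so no oracle is needed and we may assume $\beta\in\intervaloo01$; then the probabilities $\beta p$ and $(1-\beta)p$ appearing in $G_{\beta,s_0}$ have polynomially bounded bit length.)

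The reduction follows at once. On input a discounted game $G$ with $\beta$ written in binary, for each state $s_0$ of $G$ construct $G_{\beta,s_0}$, a $\beta$-recurrent stochastic mean payoff game of size polynomial in that of $G$; call the oracle that quantitatively solves such games on $G_{\beta,s_0}$; and read off from its output the mean payoff value at $s_0$, which equals $\v_\beta(G,s_0)$. Doing this for the polynomially many states $s_0$ and collecting the values quantitatively solves $G$, and the whole computation runs in polynomial time. The only step beyond bookkeeping is the identity $\nu Q=\nu$ -- the fact that the $\beta$-discounted occupation measure of $G$ seen from $s_0$ is exactly the stationary measure of the restarting chain $Q$ -- which, together with the irreducibility of $Q$ on $R$, is what collapses the mean payoff of the recurrent game onto the discounted payoff of the original.
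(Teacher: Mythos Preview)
Your proof is correct and takes essentially the same route as the paper: fix positional strategies, identify the stationary distribution of the chain induced in $G_{\beta,s_0}$ with the $\beta$-discounted occupation measure of $G$ seen from $s_0$, and conclude the payoffs coincide. Your series $\nu=(1-\beta)\sum_i\beta^i\delta_{s_0}P^i$ and the verification $\nu Q=\nu$ are a repackaging of the paper's computation $\bar\mu=(1-\beta)(I-\beta P^T)^{-1}e_0$; you are a touch more careful in restricting to the reachable set $R$ and in disposing of the degenerate case $\beta=0$.
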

\begin{proof}
Consider a stochastic game~$G$ and discount factor~$\beta$. Let $s_0$
denote a state of~$G$. We will show that
\[
\v_\beta(G,s_0) = \v_1(G_{\beta,s_0},s_0)
\]
Intuitively, an infinite play of~$G_{\beta,s_0}$ can be seen as a sequence
of finite sub-plays, each of which lasts until a transition of the second
kind is taken and the game is reset to the initial state~$s_0$. Each
sub-play lasts at least one move, but a second move is played only with
probability~$\beta$, a third one with probability~$\beta^2$, and so on,
thus imitating the discounted payoff situation.

In order to prove the proposition, it suffices to show that, for any pair of
positional strategies~$\sigma$ and~$\tau$ for~\pmax\ and~\pmin\ respectively,
one has
\begin{equation*}\tag{$\star$}
\v_1(G_{\beta,s_0},s_0,\sigma,\tau) = \v_\beta(G,s_0,\sigma,\tau)
\end{equation*}
In fact, it follows from this equation that $\sigma$ and~$\tau$ are a pair
of optimal positional strategies for~$G_{\beta,s_0}$ if and only if they
are a pair of optimal positional strategies for~$G$ with starting
position~$s_0$.

It remains to prove equation~$(\star)$.
For each state~$s$ of~$G$, call~$A_{\sigma,\tau}(s)$ the action chosen by
either~$\sigma$ or~$\tau$ (according to the owner of~$s$) at the
state~$s$. The $\beta$-discounted values
of the states of~$G$ are determined by the condition
\[
\v_\beta(G,s,\sigma,\tau) = (1-\beta) r( A_{\sigma,\tau}(s) ) + \sum_{t\in G}
\beta p_{\sigma,\tau}(s\to t) \v_\beta(G,t,\sigma,\tau)
\]
where $p_{\sigma,\tau}(v\to w)$ denotes the probability that, from
state~$s$, a transition to state~$t$ is chosen when playing
strategy~$\sigma$ against~$\tau$. If we call $s_0\dotsc s_n$ the states
of~$G$ and $\bar v_\beta =
(\v_\beta(G,s_i,\sigma,\tau))_{i=1\dotsc n}$ the value
vector of~$G$, then the condition above can be rewritten in the form
\[
\bar v_\beta = (1-\beta) \bar r + \beta P \bar v_\beta
\]
where $\bar r$ is the vector of the rewards~$\bar r_i = r(
A_{\sigma,\tau}(s_i))$, and $P$
denotes the matrix of the transition probabilities~$P_{i,j} = p_{\sigma,\tau}(s_i \to s_j)$.
Hence
\[
\bar v_\beta = (1-\beta) (I - \beta P)^{-1} \bar r
\]
where $I$ denotes the $n\times n$ identity matrix.

Now we turn our attention to the mean payoff of the pair of
strategies~$\sigma$ and~$\tau$ in~$G_{\beta,s_0}$. We can compute
$\v_1(G_{\beta,s_0},s_0,\sigma,\tau)$
averaging the rewards over the stable distribution of the Markov chain
induced by these strategies on the states of~$G$. This stable
distribution~$\mu$ must be unique, because,
by virtue of $G_{\beta,s_0}$ being $\beta$-recurrent,
the Markov chain is connected. Moreover $\mu$ is determined by the condition
\[
\mu(s) = (1-\beta)\delta_{s_0}(s) + \sum_{t\in G} \beta p_{\sigma,\tau}(t,s)\mu(t)
\]
where $\delta_{s_0}(s)$ is~$1$ if~$s=s_0$ and~$0$ otherwise. Rewriting as
above, we get
\[
\bar \mu = (1-\beta)e_0 + \beta P^T \bar\mu
\]
where $e_0$ is the first element of the canonical basis and $\bar \mu_i =
\mu(s_i)$. Hence
\[
\bar \mu = (1-\beta) (I - \beta P^T)^{-1}  e_0
\]
Now, computing the average
\begin{align*}
\v_1(G_{\beta,s_0},s_0,\sigma,\tau) &= \sum_{s\in G} \mu(s) r(
A_{\sigma,\tau}(s)) \\
&= {\bar \mu}^T \bar r \\
&= e_0^T (1-\beta) (I - \beta P)^{-1} \bar r \\
&= e_0^T \bar v_\beta \\
&= \v_\beta(G,s_0,\sigma,\tau) \qedhere
\end{align*}
\end{proof}

\begin{lemma}\label{th-lemma2}
The task of strategically solving $\beta$-recurrent stochastic mean payoff games
is polynomial time many-one reducible to the strategy recovery problem for
stochastic mean payoff games.
\end{lemma}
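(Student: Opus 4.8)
The plan is to reduce an instance of the strategic problem for a $\beta$-recurrent game $H=G_{\beta,s_0}$ to a single instance $(K,\mathbf 0)$ of strategy recovery, where $\mathbf 0$ is the all-zeros vector and $K$ is a stochastic mean payoff game that is again $\beta$-recurrent and all of whose states \emph{provably} have value $0$, so that $\mathbf 0$ is its correct quantitative solution; any pair of optimal positional strategies handed back for $K$ will then restrict to a pair of optimal positional strategies for $H$. The conceptual obstacle is that we may not compute the common value $v=\v_\beta(G,s_0)$ of the states of $H$: by Lemma~\ref{th-lemma1} that would already solve stochastic discounted games in polynomial time. So, rather than subtracting $v$ from every reward of $H$ (which would send the value to $0$), I force the value to vanish at no cost by letting $H$ play against a mirror image of itself.

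Let $\bar G$ be a disjoint copy of $G$ in which the owner of every state is swapped and, using a fresh set of actions, every reward is negated, and let $\bar s_0$ be the copy of $s_0$, so that $\v_\beta(\bar G,\bar s_0)=-\v_\beta(G,s_0)$. Let $L$ be the game on $\{\ell_0\}\sqcup V(G)\sqcup V(\bar G)$ obtained by placing $G$ and $\bar G$ side by side and adjoining one new state $\ell_0$ carrying a single new action of reward $0$ whose two transitions lead, each with probability $\tfrac12$, to $s_0$ and to $\bar s_0$; set $K=L_{\beta,\ell_0}$. Then $K$ is $\beta$-recurrent by construction and is computable from $(G,s_0,\beta)$ in polynomial time. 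To see that every state of $K$ has value $0$, note that nothing in $L$ ever returns to $\ell_0$, so a play of $L$ from $\ell_0$ is a fair coin toss followed by a play of $G$ from $s_0$ or of $\bar G$ from $\bar s_0$; hence, writing $\sigma_G,\tau_G$ and $\sigma_{\bar G},\tau_{\bar G}$ for the restrictions of positional strategies $\sigma,\tau$ to the two copies,
\[
\v_\beta(L,\ell_0,\sigma,\tau)=\tfrac\beta2\,\v_\beta(G,s_0,\sigma_G,\tau_G)+\tfrac\beta2\,\v_\beta(\bar G,\bar s_0,\sigma_{\bar G},\tau_{\bar G}).
\]
The two summands depend on disjoint parts of $(\sigma,\tau)$, so taking $\max_\sigma\min_\tau$ separates into a sum and gives $\v_\beta(L,\ell_0)=\tfrac\beta2\,(\v_\beta(G,s_0)+\v_\beta(\bar G,\bar s_0))=0$. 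Since $K=L_{\beta,\ell_0}$ is $\beta$-recurrent, the argument in the proof of Lemma~\ref{th-lemma1} applies with $L,\ell_0$ in place of $G,s_0$ and yields $\v_1(K,\ell_0,\sigma,\tau)=\v_\beta(L,\ell_0,\sigma,\tau)$ for all positional $\sigma,\tau$, whence $\v_1(K,\ell_0)=0$; and, the Markov chain induced by any positional pair being irreducible, the mean payoff, and therefore the value, is the same from every state, so all states of $K$ have value $0$.

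The step that needs the most care is the strategic equivalence. Combining the displayed identity with $\v_1(K,\ell_0,\sigma,\tau)=\v_\beta(L,\ell_0,\sigma,\tau)$, the number $\v_1(K,\ell_0,\sigma,\tau)$ is $\tfrac\beta2$ times the sum of a function of $(\sigma_G,\tau_G)$ — namely $\v_\beta(G,s_0,\sigma_G,\tau_G)$ — and a function of $(\sigma_{\bar G},\tau_{\bar G})$, and the two games involved, the $\beta$-discounted game $G$ with starting state $s_0$ and its mirror with starting state $\bar s_0$, being $\beta$-discounted games, have values attained by positional strategies of both players. Since the two summands live on disjoint blocks of the strategy pair (any deviation in $K$ changes at most each summand through disjoint data, and two upper bounds whose sum is attained must individually be attained), a positional pair $(\sigma^*,\tau^*)$ is optimal for $K$ if and only if its restriction to $V(G)$ is an optimal pair for the $\beta$-discounted game $G$ with starting state $s_0$ and its restriction to $V(\bar G)$ is an optimal pair for the mirror; for the reduction we only need the former. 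Finally, the equivalence derived from equation $(\star)$ in the proof of Lemma~\ref{th-lemma1} says that an optimal pair for the $\beta$-discounted game $G$ with starting state $s_0$ is precisely an optimal pair of positional strategies for $G_{\beta,s_0}=H$. Hence the map $H\mapsto(K,\mathbf 0)$, together with the post-processing that restricts a pair of optimal strategies for $K$ to the states of $G$, is a polynomial time many-one reduction. (If the $\beta$-recurrent input is presented merely as a game rather than as a triple $(G,s_0,\beta)$, one first recovers such a triple from its transition structure, which is routine.)
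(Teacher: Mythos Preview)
Your argument is correct and reaches the same conclusion as the paper, but the construction differs. The paper takes two copies $G^1,G^2$ of $H=G_{\beta,s_0}$, swaps owners and negates rewards in $G^2$, and \emph{cross-links} them by redirecting every transition of the second kind in each copy to the $s_0$-state of the other copy; it then analyses the stable distribution of the resulting game $G'$ directly to obtain $\v_1(G',\cdot,\sigma,\tau)=\tfrac12\v_1(G_{\beta,s_0},\cdot,\sigma^1,\tau^1)-\tfrac12\v_1(G_{\beta,s_0},\cdot,\tau^2,\sigma^2)$. You instead go back to the underlying discounted game $G$, place it next to its mirror $\bar G$, add a neutral hub $\ell_0$, and reapply the $\beta$-recurrent construction to obtain $K=L_{\beta,\ell_0}$; the identity $(\star)$ from Lemma~\ref{th-lemma1} then does the Markov-chain work for you. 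Your route is more modular---Lemma~\ref{th-lemma1} is reused as a black box and the separability of the value is immediate from the disjointness of $G$ and $\bar G$ in $L$---at the small cost of an extra state and a harmless factor $\beta/2$. The paper's route avoids the hub and stays entirely at the mean-payoff level, but must redo a stable-distribution computation tailored to the chained game. Two minor points: your phrase ``the Markov chain induced by any positional pair being irreducible'' is slightly too strong (there may be transient states), though what you actually need---a single recurrent class containing $\ell_0$, hence starting-state independence of the mean payoff---holds for exactly the reason you give; and your parenthetical about recovering $(G,s_0,\beta)$ from a bare $\beta$-recurrent game is unnecessary here, since in the overall reduction chain of Theorem~\ref{th-main} the triple is available by construction.
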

\begin{proof}
Let $G_{\beta,s_0}$ be a $\beta$-recurrent stochastic game. As we noticed,
all the states of~$G_{\beta,s_0}$ have the same value. Nevertheless, we
have no obvious way to determine this value in order to complete the
reduction. Instead, we choose to construct a new mean payoff game~$G'$ in
such a way that all the states of~$G'$ get mean payoff value equal to~$0$,
and nonetheless a pair of optimal strategies for~$G_{\beta,s_0}$ can be
recovered from a pair of optimal strategies for~$G'$. This is clearly
sufficient to establish the lemma.

The game~$G'$ is constructed as two chained copies~$G^1$ and~$G^2$
of~$G_{\beta,s_0}$, redirecting all the transitions of the second kind in
each instance -- that go to the state corresponding to~$s_0$ in that
instance -- to the $s_0$-state in the other. The states of~$G^1$ have the
same owner as in~$G_{\beta,s_0}$, and the transitions originating in~$G^1$ are
associated to the same actions with the same rewards as in~$G_{\beta,s_0}$. In~$G_2$,
however, the owners are switched and the signs of the rewards exchanged
(formally we replace each action~$A$ with a new one~$A'$
having~$r(A')=-r(A)$).
If both players play optimally, we may expect each to win in~$G_1$ precisely
as much as he loses in~$G_2$, hence, arguably the value of~$G'$ should
be~$0$. On the other hand, in order to play optimally in~$G'$, one should
play optimally in both the components, so we should be able to extract
optimal positional strategies for~$G_{\beta,s_0}$ from optimal positional
strategies for~$G'$ by mere restriction to the component~$G^1$. We will
now proceed to prove our statement.

Let us denote by~$s^1$ and~$s^2$ respectively the states of~$G^1$
and~$G^2$ corresponding to a given state~$s$ of~$G_{\beta,s_0}$.
First observe that a play of~$G'$, almost surely, will eventually reach
state~$s_0^1$, from this follows that all the states of~$G'$ must have the
same value ($G'$~is ergodic).
A positional strategy~$\sigma$ for~\pmax\ in~$G'$ can be seen as a pair of
positional strategies~$(\sigma^1,\sigma^2)$ where~$\sigma^1$ is the strategy
for~\pmax\ in~$G_{\beta,s_0}$ that we get restricting~$\sigma$ to~$G^1$,
and $\sigma^2$ is the strategy for~\pmin\ in~$G_{\beta,s_0}$ that we get
from the restriction of~$\sigma$ to~$G^2$ (remember that in~$G^2$ the
players are switched).  Similarly a strategy~$\tau$ for~\pmin\ in~$G'$
can be seen as a pair of strategies~$(\tau^1,\tau^2)$ in~$G_{\beta,s_0}$,
the first one for~\pmin\ and the second for~\pmax. We will prove that for
any~$\sigma$ and~$\tau$
\begin{equation*}\tag{$\star\star$}
\v_1(G',\cdot,\sigma,\tau) =
\frac{1}{2} \v_1(G,\cdot,\sigma^1,\tau^1)
-
\frac{1}{2} \v_1(G,\cdot,\tau^2,\sigma^2)
\end{equation*}
From this equation, it follows at once that $\sigma$ is an optimal
strategy for~$G'$ if and only if $(\sigma^1,\sigma^2)$ is a pair of
optimal strategies for~$G_{\beta,s_0}$, and, in particular, the value of~$G'$ is~$0$.

We turn now to the proof of equation~($\star\star$). Consider the unique
stable distribution~$\mu$ of the Markov process induced by~$\sigma$
and~$\tau$. Observe that, independently from~$\sigma$ and~$\tau$, at any
given state, our Markov chain has probability~$\beta$ of transitioning to
a state belonging to the same component, and probability~$1-\beta$ of
switching component. It follows that the sequence of the components must
obey the law of a two-state Markov chain with transition matrix
\[
(
\begin{matrix}
\beta & 1-\beta \\
1-\beta & \beta
\end{matrix}
)
\]
Hence~$\mu(G^1)=\mu(G^2)=1/2$.
It suffices to
prove that the probability distributions~$\mu^1$ and~$\mu^2$ defined on
the states of~$G_{\beta,s_0}$ by
$\mu^1(s) = 2 \mu(s^1)$ and $\mu^2 = 2 \mu(s^2)$ are the stable distributions
induced on~$G_{\beta,s_0}$ by the pairs of strategies~$(\sigma^1,\tau^1)$
and~$(\tau^2,\sigma^2)$ respectively.

By symmetry, we can concentrate on~$\mu^1$.
Let $p_{\sigma,\tau}(t,s)$ denote the probability of the
transition~$t\to s$ in the Markov process
induced by the strategies~$\sigma$ and~$\tau$.
Since all states of~$G^1$
except~$s_0^1$ are only reachable from within~$G^1$ itself, the
consistency equation for~$\mu$ being a stable distribution on~$G'$
\[
\mu(s) = \sum_{t\in G'} p_{\sigma,\tau}(t,s)\mu(t)
\]
implies the same condition for~$\mu^1$ at all states except~$s_0$.
At~$s_0$ one concludes by direct computation observing that the component
of the sum on the right hand side due to transitions of the second kind
must be
\[(1-\beta)\mu(G^2) = \frac{1-\beta}{2} = (1-\beta)\mu(G^1) \qedhere\]
\end{proof}

\begin{proof}[Proof of Theorem~\ref{th-main}]
By \cite[Theorem~1]{AnMil09}, solving stochastic mean payoff games strategically
is reducible to solving stochastic discounted payoff games quantitatively,
which reduces, by Lemma~\ref{th-lemma1}, to solving $\beta$-recurrent
stochastic mean payoff games quantitatively. In turn, solving such
$\beta$-recurrent games quantitatively is reducible to solving the same
strategically, just because they are, in particular, stochastic mean
payoff games. By Lemma~\ref{th-lemma2}, this final task is reducible to
the strategy recovery problem for stochastic mean payoff games.
\end{proof}

Finally, we would like to remark that our construction relies on the
interpretation of {\it strategic solution} as requiring optimal {\it
positional} strategies. Were a more general class of strategies available,
then the problem of finding an optimal one would become easier. In
particular, the games produced by Lemma~\ref{th-lemma2} happen to be
symmetric under switching the players and the signs of the rewards. Under
this circumstance, it would not be surprising if one could play optimally
by some form of strategy stealing technique.

\section*{Acknowledgements}

We would like to express gratitude to Manuel Bodirsky and Eleonora
Bardelli for interesting discussions.


\end{document}